\providecommand*{\Zset}{\mathbb{Z}} 
\providecommand*{\Qset}{\mathbb{Q}} 
\providecommand*{\Rset}{\mathbb{R}} 
\newcommand*{\bigland}{\mathop{\bigwedge}}
\newtheorem{theorem}{Theorem}[section]
\newtheorem{proposition}[theorem]{Proposition}
\newtheorem{lemma}[theorem]{Lemma}
\newtheorem{corollary}[theorem]{Corollary}
\newtheorem{definition}[theorem]{Definition}
\newtheorem{example}[theorem]{Example}
\newtheorem{remark}[theorem]{Remark}
\newcommand{\st}{\mathrel{.}}
\newcommand{\itc}{\mathrel{:}}
\newcommand{\dec}{\mathrm{DEC}}
\newcommand{\pos}{\mathrm{POS}}
\newcommand{\inc}{\mathrm{INC}}
\begin{document}

\title{Eventual Linear Ranking Functions}

\numberofauthors{2}
\author{
\alignauthor
Roberto Bagnara \\
    \affaddr{BUGSENG (\url{http://bugseng.com})} \\
    \affaddr{Dipartimento di Matematica e Informatica} \\
    \affaddr{Universit\`a  di Parma, Italy} \\
    \email{bagnara@cs.unipr.it}
\alignauthor
Fred Mesnard \\
    \affaddr{LIM/ERIMIA}\\
    \affaddr{Universit\'e de la R\'eunion, France} \\
    \email{frederic.mesnard@univ-reunion.fr}
}

\maketitle

\begin{abstract}
Program termination is a hot research topic in program analysis.
The last few years have witnessed the development of termination
analyzers for programming languages such as C and Java with remarkable
precision and performance.
These systems are largely based on techniques and tools coming
from the field of declarative constraint programming.
In this paper,%
\footnote{A preliminary version of this work, in French, has been
    presented to the \emph{Journ\'ees Francophones de Programmation par
    Contraintes}.
}
we first recall an algorithm based on Farkas' Lemma for
discovering linear ranking functions proving termination of a certain
class of loops.  Then we propose an extension of this method for
showing the existence of \emph{eventual linear ranking functions},
i.e., linear functions that become ranking functions after a
finite unrolling of the loop.
We show correctness and completeness of this algorithm.
\end{abstract}

\begin{keywords}
termination analysis, ranking function, eventual linear ranking function.
\end{keywords}

\section{Introduction}
\label{sec:introduction}

Program termination is a hot research topic in program analysis.
The last few years have witnessed the development of termination
analyzers for mainstream programming languages such as C~\cite{CookPR06}
and Java~\cite{AlbertAGGPRRZ09,OttoBEG10,SpotoMP10} with remarkable
precision and performance.
These systems are largely based on techniques and tools coming
from the field of declarative constraint programming.

Beyond the specificities of the targeted programming languages and after
several abstractions (see, e.g., \cite{SpotoMP10}), termination analysis
of entire programs boils down to termination analysis of individual loops.
Various categories of loops have been identified: for the purposes of
this paper we focus on \emph{single-path linear constraint} (SLC)
loops~\cite{Ben-AmramG13TR}.
An SLC loop over $n$ variables $x_1$, \dots,~$ x_n$ has the form
\[
  \textbf{while} \; (B\ \mathbf{x} \le \mathbf{b})
  \quad \textbf{do} \;
    A \begin{pmatrix}
        \mathbf{x} \cr
        \mathbf{x}'
      \end{pmatrix} \le \mathbf{c}
\]
where $\mathbf{x} = (x_1, \ldots, x_n)^T$
and $\mathbf{x}'=(x'_1, \ldots, x'_n)^T$
are column vectors of variables,
$B \in \Zset^{p \times n}$ is an integer matrix, $\mathbf{b} \in \Zset^p$,
$A \in \Zset^{q \times 2n}$ and $\mathbf{c} \in \Zset^q$.
Such a loop can be conveniently written as a constraint logic
programming rule:
\[
   p(\mathbf{x})
     \leftarrow
       B \mathbf{x} \le \mathbf{b}, \;
       A \begin{pmatrix}
           \mathbf{x} \cr
           \mathbf{x}'
         \end{pmatrix} \le \mathbf{c}, \; p(\mathbf{x}').
\]
When variables take their values in $\Zset$ (resp., $\Qset$),
we call such loops \emph{integer} (resp., \emph{rational}) loops.
They model a computation that starts from a point $\mathbf{x}$;
if $B\ \mathbf{x} \le \mathbf{b}$ is false, the loop terminates;
otherwise, a new point $\mathbf{x}'$ is chosen that satisfies
\(
   A (\begin{smallmatrix}
        \mathbf{x} \cr
        \mathbf{x}'
      \end{smallmatrix}) \le \mathbf{c}
\)
and iteration continues replacing the values of $\mathbf{x}$
by those of $\mathbf{x}'$.

Loop termination can always be ensured by a \emph{ranking function}
$\rho$, a function from $\Zset^n$ or $\Qset^n$ to a well-founded set.
As the domain of $\rho$ is well-founded, the computation terminates.
To the best of our knowledge, decidability of universal termination
of SLC loops (i.e., from any starting point and for any choice of
the next point at each iteration) is an open question.
Some sub-classes have been shown to be
decidable~\cite{BozgaIK12,Braverman06,Tiwari04}.
For instance, Braverman proves that termination of
loops where the body is a \emph{deterministic} assignment
$\mathbf{x}' \leftarrow A \mathbf{x}$ is decidable when the variables range over $\Qset$.
The problem is open for the non-deterministic case, as stated in his paper.
On the other hand, various generalizations have been shown
to be undecidable~\cite{Ben-AmramGM12}.

A way to investigate loop termination is to restrict the class of
considered ranking functions.
In the following section, we recall a well-known technique
for computing linear ranking functions for rational SLC loops.

In Section~\ref{sec:eventual-linear-ranking-functions}
we present the main contribution of the paper,
namely the definition of \emph{eventual linear ranking functions}:
these are linear functions that become
ranking functions after a finite unrolling of the loop.
We shall see that the number of unrolling is not pre-defined, but
depends on the data processed by the loop.
Section~\ref{sec:eventual-linear-ranking-functions} presents
complete decision procedures for the existence of
eventual linear ranking functions of SLC loops.  The presentation is
gradual and illustrates the algorithms by means
of constraint logic programming (CLP) technology and dialogs with real
CLP tools.
Section~\ref{sec:related-work-and-experiments} discusses related
work and a preliminary experimentation conducted on the benchmarks
proposed in two very recent papers.
Section~\ref{sec:conclusion-and-future-work} concludes the paper.

\section{Linear Ranking Functions}
\label{sec:linear-ranking-functions}

We first define the notion of linear (resp., affine) ranking function
for an SLC loop.

\begin{definition}
\label{def-fn-rng-lin}
Let $C$ be the SLC loop
$p(\mathbf{x}) \leftarrow c(\mathbf{x}, \mathbf{x}'),$ $p(\mathbf{x}')$
where $p$ is an n-ary relation symbol.
A \emph{linear} (resp., \emph{affine}) \emph{ranking function} $\rho$ for $C$
is a linear (resp., affine) map from $\Qset^n$ to $\Qset$ such that
\[
   \forall \mathbf{x}, \mathbf{x}'
     \itc c(\mathbf{x}, \mathbf{x}')
            \implies
              \rho(\mathbf{x}) \ge 1 + \rho(\mathbf{x}')
              \,\land\,
              \rho(\mathbf{x}) \ge 0.
\]
\end{definition}

In words, continuation of the iteration, i.e., $c(\mathbf{x}, \mathbf{x}')$,
entails that $\rho$ stays positive and strictly decreases by at least $1$
for each iteration.
We point out that if $c(\mathbf{x}, \mathbf{x}')$ is not satisfiable,
the loop ends immediately and any linear function is a ranking
function.
In the paper, we assume that $c(\mathbf{x}, \mathbf{x}')$ is
satisfiable.

\begin{remark}
\textup{Definition~\ref{def-fn-rng-lin}} might seem too restrictive
when working with rational numbers as one might prefer to replace the
decrease by $1$ by a decrease by $\varepsilon$, a fixed positive
quantity.  Actually, by multiplying such an $\varepsilon$-decrease
ranking function by $1/\varepsilon$, we see that the two definitions
are equivalent with respect to the existence of a ranking function.
\end{remark}

\begin{remark}
Although the class of affine ranking functions subsumes the class of linear
ranking functions, any decision procedure for the existence of
linear ranking functions can be extended to a decision procedure
for the existence of affine ranking functions.
To see this, note that an affine ranking function for
\begin{align*}
  p(\mathbf{x})
    &\leftarrow
      c(\mathbf{x}, \mathbf{x}'), \; p(\mathbf{x}')
\intertext{%
is a linear ranking function for
}
  p(\mathbf{x}, y)
    &\leftarrow
      c(\mathbf{x}, \mathbf{x}'), y = 1, y' = 1, \; p(\mathbf{x}', y'),
\end{align*}
where $y$ is distinct from the variables in $\mathbf{x}$.
\end{remark}

In this section, we focus on linear ranking functions for SLC loops.
After the presentation of a formulation of Farkas' Lemma
we consider the problem of verifying linear ranking
functions, and then the detection of such ranking functions.

\subsection{Farkas' Lemma}

A linear inequation $I$ over rational numbers is a logical consequence
of a finite satisfiable conjunction $S$ of linear inequations when
$I$ is a linear positive combination of the inequations of $S$.
More formally, let $S$ be
\[
  \left\{
    \begin{array}{c@{\hspace{2ex}}c@{\;}c@{\hspace{2ex}}c@{\;}c@{\hspace{2ex}}c@{\;}c@{\hspace{2ex}}c@{\;}c@{\hspace{2ex}}}
      a_{1,1}x_1 & + & \cdots & + & a_{1,n}x_n & + & b_1    & \ge & 0 \\
      \cdots    & + & \cdots & + & \cdots    & + & \cdots & \ge & 0 \\
      a_{m,1}x_1 & + & \cdots & + & a_{m,n}x_n & + & b_m    & \ge & 0.
    \end{array}
  \right.
\]
and suppose that $S$ has at least one solution.
Farkas' Lemma states the equivalence of
\[
  \forall x_1, \dots, x_n
    \itc S \implies (c_1 x_1 + \cdots + c_n x_n + d \ge 0 )
\]
and
\begin{multline*}
  \exists \lambda_1 \ge 0, \ldots, \lambda_m \ge 0
    \st \\
      \Bigl( d \ge \sum\nolimits_{i=1}^m \lambda_i b_i \Bigr)
      \land
      \bigland_{j=1}^n
        \Bigl( c_j = \sum\nolimits_{i=1}^m \lambda_i a_{i,j} \Bigr).
\end{multline*}

\subsection{Verification}
\label{sec:lrf-verification}

Given an SLC loop $C$ and a linear function $\rho$,
we can easily check whether $\rho$ is a ranking function for $C$
by testing the unsatisfiability of
$c(\mathbf{x}, \mathbf{x}'), \rho(\mathbf{x}) <  1 +  \rho(\mathbf{x}')$ and
$c(\mathbf{x}, \mathbf{x}'), \rho(\mathbf{x}) < 0$.
This test has polynomial complexity and can be done with a complete
rational solver such as , e.g., CLP($\Qset$) \cite{Holzbaur95}.

\begin{example}
\label{ex:linear-ranking-function}
For the SLC loop $C$:
\[
   p(x, y) \leftarrow x \ge 0, y' \le y - 1, x' \le x + y, y \le -1, p(x', y')
\]
the linear function $\rho(x, y) = x$ is a ranking function, as proved by
the following \emph{SICStus Prolog} session.
\begin{verbatim}
?- use_module(library(clpq)).
% library(clpq) compiled
true.
?- {X >= 0, Y1 =< Y - 1, X1 =< X + Y, Y =< -1,
    X < 1 + X1}.
false.
?- {X >= 0, Y1 =< Y - 1, X1 =< X + Y, Y =< -1,
    X < 0}.
false.
?-
\end{verbatim}
\end{example}

\subsection{Detection}
\label{sec:lrf-detection}

Given an SLC loop, we would like to know whether it admits a linear ranking
function $\rho$.
This problem, which has been studied in depth
\cite{BagnaraMPZ12IC,PodelskiR04,SohnVG91},
is decidable in polynomial time.

Let us consider Example~\ref{ex:linear-ranking-function}
and formally ask whether there exists a ranking function of the form
$\rho(x, y) = a x + b y$:
\begin{multline}
\label{formule-fnrglin}
  \exists a, b \st \forall x, y, x', y' \itc
    \left\{
      \begin{aligned}
        x  &\ge 0,  & x' &\le x+y, \\
        y  &\le -1, & y' &\le y-1
      \end{aligned}
    \right\} \\
  \implies
    \left\{
      \begin{aligned}
        a x + b y &\ge 1 +  a x' + b y', \\
        a x + b y &\ge 0.
      \end{aligned}
    \right.
\end{multline}
This formulation of the problem is executable by quantifier
elimination on a symbolic computation system like
Reduce~\cite{Hearn05}:
\begin{verbatim}
1: load_package redlog;
2: rlset r;
3: F:=ex({a,b},all({x,y,x1,y1},
  (x>=0 and y1<=y-1 and x1<=x+y and y<= -1)
  impl
  (a*x+b*y>=1+a*x1+b*y1 and a*x+b*y>=0)));
4: rlqe F;
\end{verbatim}
Statement~\texttt{1} loads the quantifier elimination module.
Statement~\texttt{2} defines $\Rset$ as the domain of discourse.
Statement~\texttt{3} initializes formula $F$.
Statement~\texttt{4} runs quantifier elimination over $F$ and returns
an equivalent formula, \texttt{true} in this case.
Hence, formula $F$ is true and there exists at least one linear
ranking function.
We can now determine the coefficients of function $\rho$ as follows:
\begin{verbatim}
5: G:=all({x,y,x1,y1},
  (x>=0 and y1<=y-1 and x1<=x+y and y<= -1)
  impl
  (a*x+b*y>=1+a*x1+b*y1 and a*x+b*y>=0));
6: rlqe G;
\end{verbatim}
We obtain
\begin{multline*}
  a^2 - ab \geq 0 \land a - b \neq 0 \land a > 0  \land b = 0 \\
  \land (a^2b - ab^2 \leq 0 \lor a^2 -ab = 0
         \lor a^2 - 2ab - a + b^2 + b \geq 0) \\
  \land (a^2 - ab = 0 \lor a^2 -2ab - a + b^2 b \geq 0),
\end{multline*}
and all values for $a$ and $b$ satisfying the above formula,
such as $a = 1$ and $b = 0$, are equally good.
Unfortunately, the complexity of the algorithms involved
will prevent us from systematically obtaining such a result
within acceptable time and memory bounds.

We now recall the most famous algorithm for this
problem~\cite{PodelskiR04}.\footnote{See also \cite{BagnaraMPZ12IC}.}
Considering $a$ and $b$ as \emph{parameters}
of the problem, we can apply Farkas' Lemma.
For the strict decrease of the ranking function we have
\begin{multline}
\label{decrease-of-the-ranking-function}
  \forall x, y, x', y' \itc
    \left\{
      \begin{aligned}
        x  &\ge 0,  & x' &\le x+y, \\
        y  &\le -1, & y' &\le y-1
      \end{aligned}
    \right\} \\
  \implies
    ax + by \ge 1 + ax' + by'.
\end{multline}
Application of Farkas' Lemma to this problem can be depicted
as follows:
\[
  \begin{array}{lc@{\hspace{1ex}}c@{\;}c@{\hspace{1ex}}c@{\;}c@{\hspace{1ex}}c@{\;}c@{\hspace{1ex}}c@{\;}c@{\hspace{1ex}}c@{\;}c@{\hspace{1ex}}}
    \lambda_1: & 1x & + & 0y & + & 0x' & + & 0y' & + & 0 & \ge & 0 \\
    \lambda_2: & 1x & + & 1y & - & 1x' & + & 0y' & + & 0 & \ge & 0 \\
    \lambda_3: & 0x & + & 1y & + & 0x' & - & 1y' & - & 1 & \ge & 0 \\
    \lambda_4: & 0x & - & 1y & + & 0x' & + & 0y' & - & 1 & \ge & 0 \\
    \implies \\
               & ax & + & by & - & ax' & - & by' & - & 1 & \ge & 0
  \end{array}
\]
We know that formula~(\ref{decrease-of-the-ranking-function})
is equivalent to the existence of
four non-negative rational numbers
$\lambda_1$, \dots,~$\lambda_4$ such that:
\begin{equation}
\label{fnlindec}
  \left\{
    \begin{aligned}
         a &=   \lambda_1 + \lambda_2,
      & -a &=   -\lambda_2, \\
         b &=   \lambda_2 + \lambda_3 - \lambda_4,
      & -b &=   -\lambda_3,
      & -1 &\ge -\lambda_3 -\lambda_4.
    \end{aligned}
  \right.
\end{equation}

The positivity of the ranking function, that is,
\begin{equation}
\label{positivity-of-the-ranking-function}
  \forall x, y, x', y' \itc
    \left\{
      \begin{aligned}
        x  &\ge 0  & x' &\le x+y \\
        y  &\le -1 & y' &\le y-1 \\
      \end{aligned}
    \right\}
  \implies
    ax + by \ge 0
\end{equation}
can be written as
\[
  \begin{array}{lc@{\hspace{1ex}}c@{\;}c@{\hspace{1ex}}c@{\;}c@{\hspace{1ex}}c@{\;}c@{\hspace{1ex}}c@{\;}c@{\hspace{1ex}}c@{\;}c@{\hspace{1ex}}}
    \lambda'_1: & 1x & + & 0y & + & 0x' & + & 0y' & + & 0 & \ge & 0 \\
    \lambda'_2: & 1x & + & 1y & - & 1x' & + & 0y' & + & 0 & \ge & 0 \\
    \lambda'_3: & 0x & + & 1y & + & 0x' & - & 1y' & - & 1 & \ge & 0 \\
    \lambda'_4: & 0x & - & 1y & + & 0x' & + & 0y' & - & 1 & \ge & 0 \\
    \implies \\
                & ax & + & by & + & 0x' & + & 0y' & + & 0 & \ge & 0.
\end{array} \]
By Farkas' Lemma, formula~\eqref{positivity-of-the-ranking-function}
is equivalent to the existence of four other non-negative rational numbers
$\lambda'_1$, \dots,~$\lambda'_4$ such that:
\begin{equation}
\label{fnlinpos}
  \left\{
    \begin{aligned}
      a &= \lambda'_1 + \lambda'_2,
                                         & 0 &=   -\lambda'_2, \\
      b &= \lambda'_2 + \lambda'_3 - \lambda'_4,
                                         & 0 &=   -\lambda'_3, &
        &
                                         & 0 &\ge -\lambda'_3 - \lambda'_4.
    \end{aligned}
  \right.
\end{equation}

Summarizing, by Farkas Lemma,
formula~\eqref{formule-fnrglin} is equivalent to the conjunction
of formulas~\eqref{fnlindec} and \eqref{fnlinpos}:
\begin{multline}
\label{fnlinrf}
  \exists a, b  \st
    \exists \lambda_1, \ldots, \lambda_4,
            \lambda'_1, \ldots, \lambda'_4 \ge 0 \st \\
    \left\{
      \begin{aligned}
         a  &= \lambda_1 + \lambda_2,
                                & -a &= -\lambda_2, \\
         b  &= \lambda_2 + \lambda_3 - \lambda_4,
                                & -b &= -\lambda_3, \\
         a  &=   \lambda'_1+\lambda'_2,
                                &  0 &= -\lambda'_2, \\
         b  &=   \lambda'_2+\lambda'_3-\lambda'_4,
                                &  0 &= -\lambda'_3, \\
         -1 &\ge -\lambda_3 -\lambda_4,
                                &  0  &\ge -\lambda'_3 -\lambda'_4.
      \end{aligned}
    \right.
\end{multline}

In theory, the problem of the existence of a linear ranking function
is polynomial. Since computing one solution (that is, values for
$a$ and $b$) is not harder than determining its existence,
a ``witness'' function, which would constitute a
\emph{termination certificate}, can also be computed in polynomial time.

The space of all linear ranking functions as defined in Definition \ref{def-fn-rng-lin},
described by parameters $a$ and $b$,
can be obtained by elimination of $\lambda_i$ and $\lambda'_i$
from~\eqref{fnlinrf} using, e.g., the algorithm of Fourier-Motzkin.
For example the SICStus Prolog program
\begin{verbatim}
fm(A, B) :-
   {L1 >= 0, L2 >= 0, L3 >= 0, L4 >= 0,
    LP1 >= 0, LP2 >= 0, LP3 >= 0, LP4 >= 0,
    A = L1 + L2, B = L2 + L3 - L4,
    A = L2, B = L3, 1 =< L3 + L4,
    A = LP1 + LP2, B = LP2 + LP3 - LP4,
    0 = LP2, 0 = LP3, 0 =< LP3 + LP4}.
\end{verbatim}
can be queried as follows:
\begin{verbatim}
| ?- fm(A, B).
B = 0, {A >= 1}.
| ?-
\end{verbatim}
It can be shown that the computed answer is equivalent to the
(significantly more involved) condition generated by Reduce.

\section{Eventual Linear Ranking Functions}
\label{sec:eventual-linear-ranking-functions}

In the previous section we have illustrated a method to decide
the existence of a linear ranking function for a rational SLC loop,
something that implies termination of the loop.
Of course, the method cannot decide termination in all cases.

\begin{example}
\label{ex-fn-rng-lin-evt-p}
The loop
\[
  p(x, y) \leftarrow x \ge 0, y' \le y - 1, x' \le x + y, p(x', y')
\]
does not admit a linear ranking function.
\end{example}

Can we conclude that such loop does not always terminate?
No, because it may admit a non-linear ranking function.

In this section we will extend the previous method so as to detect
\emph{eventual linear ranking functions}, that is, linear functions
that behave as ranking functions
\emph{after a finite number of executions of the loop body}.
Suppose that the considered SLC loop is always given with
a linear function $f(x, y)$ that increases at each
iteration of the loop in the following sense:

\begin{definition}
\label{def-inc-lin-fn}
Let $C$ be the SLC loop
$p(\mathbf{x}) \leftarrow c(\mathbf{x}, \mathbf{x}'),$ $p(\mathbf{x}')$.
A function $f(\mathbf{x})$ is \emph{increasing for $C$}
if it is linear and satisfies:
$
  \forall \mathbf{x}, \mathbf{x}'
    \itc
      c(\mathbf{x}, \mathbf{x}')
        \implies f(\mathbf{x}') \ge 1 + f(\mathbf{x}).
$
\end{definition}

\begin{example}
\label{ex-fn-rng-lin-evt-f}
The function $f(x, y) = -y$ is increasing for the loop of
\textup{Example~\ref{ex-fn-rng-lin-evt-p}}, since $y$ decreases by at
least $1$ at each iteration.
\end{example}

\begin{remark}
The generalization to affine functions  is useless.
Moreover, as we are merely interested in the existence of an
increasing function, the value of the increase ($1$ or $\varepsilon > 0$)
is irrelevant.
\end{remark}

We can now give the definition which is central to our paper.

\begin{definition}
\label{def-fn-rng-lin-evt}
Let $C$ be the rational SLC loop in clausal form
$p(\mathbf{x}) \leftarrow c(\mathbf{x}, \mathbf{x}'), \; p(\mathbf{x}')$,
where $p$ is an $n$-ary relation;
let also $f(\mathbf{x})$ be a linear increasing function for $C$.
An \emph{eventual linear ranking function} $\rho$ for $(C, f)$
is a linear map of $\Qset^n$ to $\Qset$ such that
\begin{multline*}
  \exists k \st
    \forall \mathbf{x}, \mathbf{x}' \itc
      \bigl(
        c(\mathbf{x}, \mathbf{x}')
        \,\land\,
        f(\mathbf{x}) \ge k
      \bigr) \\
        \implies
          \bigl(
            \rho(\mathbf{x}) \ge 1 + \rho(\mathbf{x}')
            \,\land\,
            \rho(\mathbf{x}) \ge 0
          \bigr).
\end{multline*}
\end{definition}
For comparison with Definition~\ref{def-fn-rng-lin},
remark that the threshold $k$ is existentially quantified
and that $f(\mathbf{x}) \ge k$ is imposed in the implication antecedent.
It should also be noted that, if such a rational $k$ exists,
then each $k' \ge k$ satisfies the condition
of Definition~\ref{def-fn-rng-lin-evt}.
On the other hand, since, by hypothesis, $f$ strictly increases
at each iteration, there are two cases:
either $f$ is bounded from above by a constant, and thus the loop will
terminate;
or, after a finite number of iterations, $f$ will cross the threshold $k$
and $\rho$ becomes a linear ranking function in the sense of
Section~\ref{sec:linear-ranking-functions} so that, again, the loop terminates.

Eventual linear ranking functions are a generalization
of linear ranking functions.

\begin{proposition}
\label{lrf-implies-elrf}
Let $C$ be an SLC loop. If $\rho$ is  a linear ranking function for $C$,
then there exists an increasing function $f$ such that $(C, f)$
has an eventual linear ranking function.
\end{proposition}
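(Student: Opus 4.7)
The plan is to exhibit a concrete choice of $f$ and then reuse $\rho$ itself as the eventual linear ranking function. The key observation is that negating a ranking function yields an increasing function in the sense of Definition~\ref{def-inc-lin-fn}, so $f := -\rho$ is a natural candidate.

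First I would verify that $f = -\rho$ is linear and increasing for $C$. Linearity is immediate from the linearity of $\rho$. For the increase property, I would start from the defining inequality of a linear ranking function,
\[
  \forall \mathbf{x}, \mathbf{x}' \itc
    c(\mathbf{x}, \mathbf{x}') \implies \rho(\mathbf{x}) \ge 1 + \rho(\mathbf{x}'),
\]
and simply rearrange the consequent as $-\rho(\mathbf{x}') \ge 1 + (-\rho(\mathbf{x}))$, which is exactly $f(\mathbf{x}') \ge 1 + f(\mathbf{x})$. Hence $f$ satisfies Definition~\ref{def-inc-lin-fn}.

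Next I would show that $\rho$ itself serves as an eventual linear ranking function for the pair $(C, f)$. Choose any threshold, say $k := 0$. Because $\rho$ is already a ranking function for $C$, the implication
\[
  c(\mathbf{x}, \mathbf{x}') \implies
    \bigl(\rho(\mathbf{x}) \ge 1 + \rho(\mathbf{x}')
          \,\land\, \rho(\mathbf{x}) \ge 0\bigr)
\]
holds universally. Strengthening the antecedent by conjoining $f(\mathbf{x}) \ge k$ can only preserve the implication, so the condition of Definition~\ref{def-fn-rng-lin-evt} is satisfied with this choice of $k$.

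There is no real obstacle here: the only thing to notice is the sign-flip duality between ranking and increasing functions, together with the trivial observation that adding hypotheses preserves implications. Thus $(C, f) = (C, -\rho)$ admits $\rho$ as an eventual linear ranking function, completing the proof.
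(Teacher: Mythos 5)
Your proof is correct and uses the same key construction as the paper, namely taking $f := -\rho$ and checking that the decrease condition for $\rho$ is literally the increase condition for $-\rho$. The only (harmless) difference is in the final witness: you reuse $\rho$ itself as the eventual linear ranking function, noting that conjoining $f(\mathbf{x}) \ge k$ to the antecedent only strengthens it, whereas the paper takes the constant function $\rho' = 0$ with $k = 1$ and relies on the antecedent being unsatisfiable (since $f = -\rho \le 0$ whenever the guard holds); both verifications are immediate.
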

\begin{proof}
By hypothesis, there exists a linear ranking function
$\rho(\mathbf{x})$ for $C$.
The linear function $f(\mathbf{x}) \overset{\mathtt{def}}{=} -\rho(\mathbf{x})$
is non-positive and strictly increasing for $C$.
Considering $k = 1$ it can be seen that the function
$\rho'(\mathbf{x})\overset{\mathtt{def}}{=}0$ is an eventual linear
ranking function for $(C, f)$.
\end{proof}

The generalization is strict as the loop of
Example~\ref{ex-fn-rng-lin-evt-p} has no linear ranking function, but
does have an eventual linear ranking function, as will be shown in the
next section.

\subsection{Detection given a Linear Increasing Function}
\label{sec:elrf-semi-detection}

As a first step towards full automation of the synthesis of eventual
linear ranking functions, we assume that an SLC loop is given with a
particular linear increasing function.
Let us consider, e.g., the SLC loop of
Example~\ref{ex-fn-rng-lin-evt-p} and the increasing
function of Example~\ref{ex-fn-rng-lin-evt-f}.
Defining $\rho(x, y) = a x + b y$,
$\rho$~is an eventual linear ranking function when
\begin{multline*}
  \exists a, b, k  \st \forall x, y, x', y' \itc
    \left\{
      \begin{aligned}
         x &\ge 0, & x' &\le x + y, \\
        -y &\ge k, & y' &\le y - 1
       \end{aligned}
    \right\} \\
      \implies
        \left\{
          \begin{aligned}
            ax + by &\ge 1 + ax' + by', \\
            ax + by &\ge 0.
          \end{aligned}
        \right.
\end{multline*}
This definition of the problem, that we will denote for brevity with
$\exists a, b, k \st \phi(a, b, k)$, is also solvable via quantifier
elimination, hence the problem is decidable.
Considering $a$, $b$ and $k$ as parameters,
we can apply Farkas' Lemma as follows:
\[
  \begin{array}{lc@{\hspace{1ex}}c@{\;}c@{\hspace{1ex}}c@{\;}c@{\hspace{1ex}}c@{\;}c@{\hspace{1ex}}c@{\;}c@{\hspace{1ex}}c@{\;}c@{\hspace{1ex}}}
    \lambda_1: & 1x & + & 0y & + & 0x' & + & 0y' & + & 0 & \ge & 0 \\
    \lambda_2: & 1x & + & 1y & - & 1x' & + & 0y' & + & 0 & \ge & 0 \\
    \lambda_3: & 0x & + & 1y & + & 0x' & - & 1y' & - & 1 & \ge & 0 \\
    \lambda_4: & 0x & - & 1y & + & 0x' & + & 0y' & - & k & \ge & 0 \\
    \implies \\
               & ax & + & by & - & ax' & - & by' & - & 1 & \ge & 0 \\
               & ax & + & by &   &     &   &     &   &   & \ge & 0.
  \end{array}
\]
Hence, formula $\phi(a, b, k)$ is equivalent to the conjunction
of formulas $\dec(a, b, k)$, i.e.,
\begin{align*}
  \exists &\lambda_1 \ge 0, \ldots, \lambda_4 \ge 0
  \st \\
    &\left\{
      \begin{aligned}
         a &=    \lambda_1 + \lambda_2,
                                         & -a &=   -\lambda_2, \\
         b &=    \lambda_2 + \lambda_3 - \lambda_4,
                                         & -b &=   -\lambda_3, &
        -1 &\ge -\lambda_3 - k \lambda_4,
      \end{aligned}
    \right.
\intertext{%
ensuring the decreasing of the ranking function,
and the formula $\pos(a, b, k)$, that is,
}
  \exists &\lambda'_1 \ge 0, \ldots, \lambda'_4 \ge 0
  \st \\
    &\left\{
      \begin{aligned}
        a &=    \lambda'_1 + \lambda'_2,
                                         & 0 &=   -\lambda'_2 \\
        b &=    \lambda'_2 + \lambda'_3 - \lambda'_4,
                                         & 0 &=   -\lambda'_3 &
        0 &\ge -\lambda'_3 - k\lambda'_4,
      \end{aligned}
    \right.
\end{align*}
ensuring the positivity of the ranking function.

Let us focus on $\dec(a, b, k)$.
We observe that the product $k \lambda_4$
leads to a non-linearity that we can circumvent
by noting that, as $\lambda_4 \ge 0$,
either $\lambda_4 = 0$ (hence $k\lambda_4 = 0$) or $\lambda_4 > 0$.
In the latter case, we introduce a new variable $P= k \lambda_4$.
We have the property:
\begin{lemma}
\label{lem1-fn-rng-lin-evt}
Formula $\exists k \st \dec(a, b, k)$ is equivalent to the disjunction
$\dec_1(a, b) \lor \dec_2(a, b)$.
\end{lemma}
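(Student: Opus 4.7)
The plan is to eliminate the existentially quantified threshold $k$ from $\dec(a,b,k)$ by case-splitting on the sign of $\lambda_4$, since $\lambda_4$ is the only coefficient that multiplies $k$ in the sole offending constraint $-1 \ge -\lambda_3 - k\lambda_4$. Because the $\lambda_i$ are non-negative and all other constraints of $\dec(a,b,k)$ are already linear in $a$, $b$ and the $\lambda_i$'s (they do not mention $k$), the case split of the lemma statement into two disjuncts is exactly the case split $\lambda_4 = 0$ versus $\lambda_4 > 0$.

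First I would treat the easy case $\lambda_4 = 0$: the product $k\lambda_4$ collapses to $0$, so the critical inequality simplifies to $\lambda_3 \ge 1$ and $k$ disappears from the whole system. What remains is a purely linear formula over $a$, $b$, $\lambda_1, \lambda_2, \lambda_3$, and we take this to be $\dec_1(a,b)$ after existentially quantifying away the $\lambda_i$'s. For the other direction, any $(a,b)$ satisfying $\dec_1(a,b)$ witnesses $\dec(a,b,k)$ for, say, $k = 0$, by setting $\lambda_4 := 0$.

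For the case $\lambda_4 > 0$, I would perform the change of variable $P := k\lambda_4$. The key observation is that the map $(k,\lambda_4) \mapsto (P,\lambda_4)$ is a bijection between $\Qset \times \Qset_{>0}$ and itself: given any $P \in \Qset$ and any $\lambda_4 > 0$, the value $k = P/\lambda_4$ realizes $P = k\lambda_4$. Hence $\exists k \st \dec(a,b,k) \land \lambda_4 > 0$ is equivalent to the linear formula obtained by replacing $k\lambda_4$ with a fresh non-negative-or-free variable $P$ and asserting $\lambda_4 > 0$; existentially quantifying out $\lambda_1, \ldots, \lambda_4, P$ yields $\dec_2(a,b)$. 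The backward direction is the same bijection used in reverse.

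Combining both cases, $\exists k \st \dec(a,b,k)$ holds iff at least one of the two branches admits suitable multipliers, giving the disjunction $\dec_1(a,b) \lor \dec_2(a,b)$. The only subtlety I anticipate is the boundary between strict and non-strict positivity of $\lambda_4$: one has to verify that the bijection argument genuinely requires $\lambda_4 > 0$ (since $P$ ranges freely over $\Qset$) and that the $\lambda_4 = 0$ branch cannot be subsumed by taking a limit inside $\dec_2$; both are immediate once the change of variable is written out, but they are what makes the two-disjunct form tight.
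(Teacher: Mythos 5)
Your proof is correct and follows essentially the same route as the paper: the case split on $\lambda_4 = 0$ versus $\lambda_4 > 0$, the substitution $P = k\lambda_4$ with inverse $k = P/\lambda_4$, and the choice $k = 0$, $\lambda_4 = 0$ for the converse of the first disjunct are exactly the steps in the paper's proof. Your added remark that $P$ must range freely over $\Qset$ (so the bijection genuinely needs $\lambda_4 > 0$) is a correct and welcome clarification of why the two disjuncts are not interchangeable.
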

In our case, $\dec_1(a, b)$ is equivalent to
\begin{align*}
  \exists &\lambda_1, \lambda_2, \lambda_3 \ge 0
    \st \\
      &\left\{
        \begin{aligned}
          a  &=   \lambda_1 + \lambda_2,  & -a &=   -\lambda_2, \\
          b  &=    \lambda_2 + \lambda_3, & -b &=   -\lambda_3, &
          -1 &\ge -\lambda_3,
        \end{aligned}
      \right.
\intertext{%
and $\dec_2(a, b)$ is equivalent to
}
  \exists &\lambda_1 \ge 0,  \lambda_2 \ge 0,  \lambda_3 \ge 0, \lambda_4 > 0, P
    \st \\
      &\left\{
        \begin{aligned}
            a &=    \lambda_1+\lambda_2,           & -a &=   -\lambda_2, \\
            b &=    \lambda_2+\lambda_3-\lambda_4, & -b &=   -\lambda_3, &
           -1 &\ge -\lambda_3 - P.
        \end{aligned}
      \right.
\end{align*}

\begin{proof}
($\Longrightarrow$) Let  $k$ be a rational number and
$\lambda_i$'s for  $1 \le i \le 4$ four non-negative rational numbers
such that $\dec(a, b, k)$ holds.
If $\lambda_4 = 0$ then  $\dec(a, b, k)$ simplifies to $\dec_1(a, b)$
which is true.
If $\lambda_4 > 0$, we take $P = k\lambda_4$ and
we can see that $\dec_2(a, b)$ is true.

\noindent
($\Longleftarrow$) Assume first that $\dec_1(a, b)$ is true.
Then, taking $\lambda_4 = 0$ and $k = 0$ (any rational number
would be fine for $k$), we see that $\exists k \st \dec(a, b, k)$ is true.
Assume then that $\dec_2(a, b)$  is true.
Taking $k = P/\lambda_4$ (this is always possible as
$\lambda_4 >0$), we observe that there exists $k$
such that  $\dec(a, b, k)$ is true.
\end{proof}

For the positivity condition, we can prove in a similar way
\begin{lemma}
\label{lem2-fn-rng-lin-evt}
Formula $\exists k \st \pos(a, b, k)$ is equivalent to the disjunction
$\pos_1(a, b) \lor \pos_2(a, b)$.
\end{lemma}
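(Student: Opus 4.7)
The plan is to mirror exactly the argument used for Lemma~\ref{lem1-fn-rng-lin-evt}, since the formula $\pos(a,b,k)$ has the same structural source of non-linearity as $\dec(a,b,k)$: the product $k\lambda'_4$ appearing in the last inequality, with $\lambda'_4 \ge 0$. First I would spell out what the analogues $\pos_1(a,b)$ and $\pos_2(a,b)$ should be, by the same $\lambda'_4 = 0$ versus $\lambda'_4 > 0$ case split already used for $\dec$. Concretely, $\pos_1(a,b)$ is the formula obtained from $\pos(a,b,k)$ by setting $\lambda'_4 = 0$ (so the $-k\lambda'_4$ term drops out and $k$ disappears), and $\pos_2(a,b)$ is obtained by requiring $\lambda'_4 > 0$ and replacing $k\lambda'_4$ by a fresh existentially quantified rational $P'$, with the inequality becoming $0 \ge -\lambda'_3 - P'$.

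For the forward direction ($\Longrightarrow$), assume $k$ and non-negative $\lambda'_1,\dots,\lambda'_4$ witness $\pos(a,b,k)$. If $\lambda'_4 = 0$, the term $k\lambda'_4$ vanishes regardless of $k$, so the remaining equalities and the inequality $0 \ge -\lambda'_3$ are exactly $\pos_1(a,b)$. If $\lambda'_4 > 0$, I would set $P' \overset{\mathtt{def}}{=} k\lambda'_4$; then the same $\lambda'_i$'s together with $P'$ witness $\pos_2(a,b)$. For the backward direction ($\Longleftarrow$), if $\pos_1(a,b)$ holds, extend the witnesses by $\lambda'_4 = 0$ and pick any $k$ (say $k=0$) to obtain $\pos(a,b,k)$; if $\pos_2(a,b)$ holds with witnesses $\lambda'_1,\lambda'_2,\lambda'_3,\lambda'_4,P'$, define $k \overset{\mathtt{def}}{=} P'/\lambda'_4$, which is well-defined because $\lambda'_4 > 0$, and verify that the original $\pos(a,b,k)$ is satisfied.

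I do not expect any real obstacle here: the only place where the argument for $\dec$ uses anything specific is the right-hand side of the final inequality ($-1$ versus $0$), and this constant plays no role in the case analysis on $\lambda'_4$. The one thing I would be careful about is the backward case $\pos_1 \Longrightarrow \exists k \st \pos(a,b,k)$: since $k$ has disappeared from $\pos_1$, I must explicitly exhibit some $k$, and any rational number works because the coefficient $\lambda'_4$ chosen for the reconstruction is $0$. With that caveat the proof is a direct transcription of the proof of Lemma~\ref{lem1-fn-rng-lin-evt}.
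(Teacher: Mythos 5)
Your proof is correct and follows exactly the route the paper intends: the paper omits an explicit proof of this lemma, stating only that it is proved ``in a similar way'' to Lemma~\ref{lem1-fn-rng-lin-evt}, and your case split on $\lambda'_4 = 0$ versus $\lambda'_4 > 0$ with the substitution $P' = k\lambda'_4$ (and $k = P'/\lambda'_4$ in the converse) is precisely that argument transposed to the positivity constraint.
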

In our case, $\pos_1(a, b)$ is equivalent to
\begin{align*}
  \exists &\lambda'_1, \lambda'_2, \lambda'_3 \ge 0
    \st \\
      &\left\{
        \begin{aligned}
          a &=    \lambda'_1 + \lambda'_2, & 0 &=   -\lambda'_2, \\
          b &=    \lambda'_2 + \lambda'_3, & 0 &=   -\lambda'_3, &
          0 &\ge -\lambda'_3,
        \end{aligned}
      \right.
\intertext{%
and $\pos_2(a, b)$ to
}
  \exists &\lambda'_1, \lambda'_2, \lambda'_3 \ge 0, \lambda'_4 > 0,
          P'
    \st \\
      &\left\{
        \begin{aligned}
          a &=    \lambda'_1 + \lambda'_2,              & 0 &=   -\lambda'_2, \\
          b &=    \lambda'_2 + \lambda'_3 - \lambda'_4, & 0 &=   -\lambda'_3, &
          0 &\ge -\lambda'_3  -P'.
        \end{aligned}
      \right.
\end{align*}

Combining the previous results gives
\begin{proposition}
Formula $\exists k.\phi(a, b, k)$ is equivalent to
$[\dec_1(a, b) \lor \dec_2(a, b)] \land [\pos_1(a, b) \lor \pos_2(a, b)]$.
\end{proposition}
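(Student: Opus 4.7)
The plan is to rewrite $\exists k \st \phi(a,b,k)$ as the conjunction of two independent existentials, one for decrease and one for positivity, and then finish with Lemmas~\ref{lem1-fn-rng-lin-evt} and~\ref{lem2-fn-rng-lin-evt}.

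First I would observe that, by the Farkas-style derivation carried out just before the statement of Lemma~\ref{lem1-fn-rng-lin-evt}, the formula $\phi(a,b,k)$ is equivalent to $\dec(a,b,k) \land \pos(a,b,k)$: the same system of inequalities $\{x \ge 0,\; x' \le x+y,\; -y \ge k,\; y' \le y-1\}$ is used as antecedent for both consequents, and Farkas' Lemma is applied independently to each of the two linear consequents. So the statement reduces to showing
\[
  \exists k \st \bigl(\dec(a,b,k) \land \pos(a,b,k)\bigr)
  \;\equiv\;
  \bigl(\exists k \st \dec(a,b,k)\bigr) \land \bigl(\exists k \st \pos(a,b,k)\bigr),
\]
after which Lemmas~\ref{lem1-fn-rng-lin-evt} and~\ref{lem2-fn-rng-lin-evt} directly rewrite the right-hand side as $[\dec_1(a,b) \lor \dec_2(a,b)] \land [\pos_1(a,b) \lor \pos_2(a,b)]$.

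The $(\Longrightarrow)$ direction of the splitting is trivial, since a single witness $k$ simultaneously witnesses both conjuncts. For the $(\Longleftarrow)$ direction, which is the main obstacle, I would exploit monotonicity of $\dec$ and $\pos$ in~$k$. Concretely, in $\dec(a,b,k)$ the only occurrence of $k$ is in the constraint $\lambda_3 + k\lambda_4 \ge 1$ with $\lambda_4 \ge 0$, so enlarging $k$ preserves the constraint; hence if $\dec(a,b,k_1)$ holds then $\dec(a,b,k)$ holds for every $k \ge k_1$. The same argument applies to $\pos(a,b,k)$ via the constraint $\lambda'_3 + k\lambda'_4 \ge 0$ with $\lambda'_4 \ge 0$. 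Thus, given witnesses $k_1$ for $\exists k \st \dec(a,b,k)$ and $k_2$ for $\exists k \st \pos(a,b,k)$, the single value $k \overset{\mathtt{def}}{=} \max(k_1, k_2)$ witnesses both simultaneously, establishing the splitting.

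The last step is a direct substitution: $\exists k \st \dec(a,b,k)$ is replaced by $\dec_1(a,b) \lor \dec_2(a,b)$ using Lemma~\ref{lem1-fn-rng-lin-evt}, and analogously for the positivity side using Lemma~\ref{lem2-fn-rng-lin-evt}. The resulting conjunction of disjunctions is precisely the claimed equivalent formula, which is moreover linear in the parameters $a$, $b$ and can therefore be fed to a standard solver. The only non-routine point in the whole argument is justifying the split of $\exists k$ over the conjunction, which is why the monotonicity observation is the real content of the proof.
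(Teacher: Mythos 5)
Your proof is correct and takes essentially the same route as the paper's: reduce to splitting $\exists k$ over the conjunction $\dec(a,b,k) \land \pos(a,b,k)$, take $k = \max$ of the two witnesses for the nontrivial direction, and then invoke Lemmas~\ref{lem1-fn-rng-lin-evt} and~\ref{lem2-fn-rng-lin-evt}. The only difference is that you explicitly justify why the $\max$ works via monotonicity of the constraints $\lambda_3 + k\lambda_4 \ge 1$ and $\lambda'_3 + k\lambda'_4 \ge 0$ in $k$, a point the paper leaves implicit.
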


\begin{proof}
Thanks to the previous lemmata, it only remains to justify the equivalence
between the formulas $\exists k \st \phi(a, b, k)$
and $\exists k \st \dec(a, b, k) \land \exists k \st \pos(a, b, k)$.

($\Longrightarrow$) Let $k_0$ be a rational such that $\phi(a, b, k_0)$.
We have $\dec(a, b, k_0)$ and $\pos(a, b, k_0)$ because
\[
  \phi(a, b, k) \iff \dec(a, b, k) \land \pos(a, b, k).
\]

($\Longleftarrow$) Assume the existence of $k_d$
such that  $\dec(a, b, k_d)$ and the existence of $k_p$
such that $\pos(a, b, k_p)$.
Then the rational $k_0 = \max(k_d, k_p)$
verifies $\dec(a, b, k_0) \land \pos(a, b, k_0)$
and shows that $\exists k \st \phi(a, b, k)$.
\end{proof}

Back to our initial problem, the existence of an eventual linear
ranking function is equivalent to the satisfiability of at least one
of the following four linear systems:
\begin{align*}
  \dec_1(a, b) &\land \pos_1(a, b), \\
  \dec_1(a, b) &\land \pos_2(a, b), \\
  \dec_2(a, b) &\land \pos_1(a, b), \\
  \dec_2(a, b) &\land \pos_2(a, b),
\end{align*}
which we can decide in polynomial time.
For our running example,
$\dec_2(a, b) \land \pos_1(a, b)$ is satisfiable
as proved by the following SICStus Prolog query:
\begin{verbatim}
?- dec2pos1.
true.
?-
\end{verbatim}
after compilation of the program:
\begin{verbatim}
dec2pos1 :-
   {L1 >= 0, L2 >= 0, L3 >= 0, L4 > 0,
    A = L1 + L2, B = L2 + L3 - L4,
    A = L2, B = L3, 1 =< L3 + P,
    LP1 >= 0, LP2 >= 0, LP3 >= 0,
    A = LP1 + LP2, B = LP2 + LP3,
    0 = LP2, 0 = LP3, 0 =< LP3}.
\end{verbatim}

The procedure we have informally outlined by means of examples is
actually completely general.
It is embodied in Algorithm~\ref{algo1}, which is a (correct and
complete) decision procedure for the existence of an eventual
linear ranking function given a linear increasing function.

\begin{algorithm}
\caption{Existence of an eventual linear ranking function, given a linear increasing function}
\label{algo1}
\begin{algorithmic}[1]
\REQUIRE
$C$, an SLC loop
$p(\mathbf{x}) \leftarrow c(\mathbf{x}, \mathbf{x}'), p(\mathbf{x}')$,
and $f$, a linear increasing function for $C$
\ENSURE
Returns \TRUE\ if and only if, for some vector $\mathbf{a}$,
$\rho(\mathbf{x}) = \mathbf{a} \mathbf{x} =  \sum_i a_i x_i$
is an eventual linear ranking function for $(C, f)$.
\STATE
$\dec(\mathbf{a}, k) \gets \text{Farkas for the decreasing of $\rho$}$
\STATE
$\dec_1(\mathbf{a}), \dec_2(\mathbf{a}) \gets \text{linearization of  $\dec(\mathbf{a}, k)$}$
\STATE
$\pos(\mathbf{a}, k) \gets \text{Farkas for the positivity of $\rho$}$
\STATE
$\pos_1(\mathbf{a}), \pos_2(\mathbf{a}) \gets \text{linearization of $\pos(\mathbf{a}, k)$}$
\IF {$\bigvee_{1\le i, j \le2} \dec_i(\mathbf{a}) \land \pos_j(\mathbf{a})$ is satisfiable}
  \RETURN \TRUE
\ELSE
  \RETURN \FALSE
\ENDIF
\end{algorithmic}
\end{algorithm}

\begin{theorem}
\label{algo-is-a-decision-procedure-for-elrf}
Let $C$ be an SLC loop and $f$ an increasing function for $C$.
\textup{Algorithm~\ref{algo1}} decides in polynomial time the existence
of an eventual linear ranking function for $(C, f)$.
\end{theorem}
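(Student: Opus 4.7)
The plan is to separate correctness from complexity, leaning on the lemmata and proposition already proved for the two-variable example and arguing that they generalize verbatim to arbitrary $n$ and to an arbitrary number $m$ of constraints in $c(\mathbf{x},\mathbf{x}')$.

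For correctness, I would unfold the definition: $\rho(\mathbf{x}) = \mathbf{a}\mathbf{x}$ is an eventual linear ranking function for $(C,f)$ iff $\exists k \st \phi(\mathbf{a},k)$, where $\phi(\mathbf{a},k)$ is the conjunction of a decrease implication and a positivity implication under the antecedent $c(\mathbf{x},\mathbf{x}') \land f(\mathbf{x}) \ge k$. Since the antecedent is (assumed) satisfiable for any sufficiently large $k$ and is a finite system of linear inequalities with $k$ appearing only as the constant term in one of them, Farkas' Lemma applies with parameters $\mathbf{a}$ and $k$ and yields the systems $\dec(\mathbf{a},k)$ and $\pos(\mathbf{a},k)$ computed in Steps~1 and~3. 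The separation $\exists k \st \phi(\mathbf{a},k) \iff (\exists k \st \dec(\mathbf{a},k)) \land (\exists k \st \pos(\mathbf{a},k))$ follows by the same $\max(k_d,k_p)$ argument used in the proposition just above Algorithm~\ref{algo1}: the forward direction is trivial, and the backward direction exploits that $\dec$ and $\pos$ are preserved when $k$ is replaced by any larger value (because the sole occurrence of $k$ is of the form $-k\lambda_m$ with $\lambda_m \ge 0$ appearing on the right-hand side of an inequality).

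Next I would invoke Lemma~\ref{lem1-fn-rng-lin-evt} and Lemma~\ref{lem2-fn-rng-lin-evt}, whose proofs go through unchanged in the general case: the non-linearity $k\lambda_m$ sits in a single scalar coefficient, so the case split $\lambda_m = 0$ versus $\lambda_m > 0$ (substituting $P \overset{\mathtt{def}}{=} k\lambda_m$ in the latter, and recovering $k = P/\lambda_m$ in the converse direction) linearizes $\exists k \st \dec(\mathbf{a},k)$ into $\dec_1(\mathbf{a}) \lor \dec_2(\mathbf{a})$ and similarly for $\pos$. Distributing the conjunction over the disjunctions gives the four linear feasibility problems of Step~5, and their disjunction is satisfiable iff an eventual linear ranking function exists.

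For the complexity claim, I would observe that each of $\dec_1,\dec_2,\pos_1,\pos_2$ is a system of linear (in)equalities whose number of variables and constraints is linear in the size of $C$ (Farkas introduces one multiplier per constraint of $c$, plus at most one auxiliary variable $P$ or $P'$), and that satisfiability of a linear system over $\Qset$ is decidable in polynomial time (e.g.\ by Khachiyan's or Karmarkar's algorithm). Since only four such systems must be tested, the overall cost remains polynomial. The only step that requires care is verifying that the case distinction really produces just two branches per side, rather than one per multiplier; this is the main subtlety, and it works precisely because $k$ multiplies exactly one Farkas coefficient, namely the one associated with the constraint $f(\mathbf{x}) \ge k$.
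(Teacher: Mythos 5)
Your proposal follows essentially the same route as the paper, which itself leaves Theorem~\ref{algo-is-a-decision-procedure-for-elrf} without a displayed proof and instead relies on the general form of Lemmata~\ref{lem1-fn-rng-lin-evt} and~\ref{lem2-fn-rng-lin-evt} plus the $\max(k_d,k_p)$ proposition: apply Farkas with $\mathbf{a}$ and $k$ as parameters, split the existential over $k$ across $\dec$ and $\pos$ by monotonicity in $k$, linearize via the case split $\lambda_m=0$ versus $\lambda_m>0$ with $P=k\lambda_m$, and check four linear-programming feasibility problems in polynomial time. Your explicit remarks on why only two branches arise per side (the threshold $k$ multiplies a single Farkas coefficient) and on the satisfiability assumption needed for Farkas' Lemma are sound and, if anything, slightly more careful than the paper's own exposition.
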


Computing an eventual linear ranking function $\rho$
and its associated threshold $k$ can be done as follows:
\begin{itemize}
\item
if $\dec_1(\mathbf{a}) \land \pos_1(\mathbf{a})$ is satisfiable,
we compute a solution $\mathbf{a}$,
$\rho(\mathbf{x}) = \mathbf{a}\mathbf{x}$ is a standard linear ranking function
and Proposition~\ref{lrf-implies-elrf} applies;
\item
if $\dec_1(\mathbf{a}) \land \pos_2(\mathbf{a})$ is satisfiable,
we compute a solution $\mathbf{a}$, $\mathbf{\lambda}'$, $P'$
and we take $k = P'/\lambda'_n$;
\item
if $\dec_2(\mathbf{a}) \land \pos_1(\mathbf{a})$ is satisfiable,
we compute a solution $\mathbf{a}$, $\mathbf{\lambda}$, $P$
and we take $k = P/\lambda_n$;
\item
if $\dec_2(\mathbf{a}) \land \pos_2(\mathbf{a})$ is satisfiable,
we compute a solution
$\mathbf{a}$, $\mathbf{\lambda}$, $P$, $\mathbf{\lambda}'$, $P'$
and we take $k = \max(P/\lambda_n, P'/\lambda'_n)$.
\end{itemize}

\begin{example}
Continuing with \textup{Example~\ref{ex-fn-rng-lin-evt-p}},
here is the most general solution of $\dec_2(a, b) \land \pos_1(a, b)$:
\begin{verbatim}
?-  {L1 >= 0, L2>= 0, L3 >= 0, L4 > 0,
     A = L1 + L2, B = L2 + L3 - L4,
     A = L2, B = L3, 1 =< L3 + P,
     LP1 >= 0, LP2 >= 0, LP3 >= 0,
     A = LP1 + LP2, B = LP2 + LP3,
     0 = LP2, 0 = LP3, 0 =< LP3}.
B = 0, L1 = 0, L3 = 0, LP2 = 0, LP3 = 0,
{LP1 = L4, L2 = L4, A = L4, L4 > 0, P >= 1}.
?-
\end{verbatim}
One particular solution is
$b = 0 = \lambda_1  = \lambda_3 = \lambda'_2 = \lambda'_3$,
$a = 1 = \lambda'_1 = \lambda_2 = \lambda_4$,
$P = 1$.
Hence $\rho(x, y) = x$ is an eventual linear ranking function
from the threshold $k = P/\lambda_4 = 1$.
\end{example}

We also provide a decision procedure for the existence
of an eventual \emph{affine} ranking function.

\begin{corollary}
The existence of an eventual affine ranking function for
an SLC loop and associated increasing function, $(C, f)$,
can be decided in polynomial time.
\end{corollary}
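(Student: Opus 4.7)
The plan is to reduce the affine case to the linear case via exactly the construction already used in the earlier remark relating affine and linear ranking functions, and then invoke Theorem~\ref{algo-is-a-decision-procedure-for-elrf}. Given an SLC loop $C$ of the form $p(\mathbf{x}) \leftarrow c(\mathbf{x},\mathbf{x}'),\, p(\mathbf{x}')$ together with a linear increasing function $f$, I build the enlarged loop
\[
   C' \itc\; p(\mathbf{x}, y)
     \leftarrow
       c(\mathbf{x}, \mathbf{x}'),\; y = 1,\; y' = 1,\; p(\mathbf{x}', y'),
\]
where $y$ is fresh, and the function $f'(\mathbf{x}, y) \overset{\mathtt{def}}{=} f(\mathbf{x})$.

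First I would verify that $f'$ is a linear increasing function for $C'$ in the sense of Definition~\ref{def-inc-lin-fn}. This is immediate: $f'$ is linear in $(\mathbf{x}, y)$, and whenever the body of $C'$ is satisfied, $c(\mathbf{x}, \mathbf{x}')$ holds, so $f(\mathbf{x}') \ge 1 + f(\mathbf{x})$ by hypothesis on $f$, which rewrites as $f'(\mathbf{x}', y') \ge 1 + f'(\mathbf{x}, y)$.

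Next I would establish the key correspondence: $(C, f)$ admits an eventual affine ranking function if and only if $(C', f')$ admits an eventual linear ranking function. The forward direction takes any eventual affine ranking function $\rho(\mathbf{x}) = \mathbf{a}\mathbf{x} + c$ with threshold $k$ and produces the linear map $\rho'(\mathbf{x}, y) \overset{\mathtt{def}}{=} \mathbf{a}\mathbf{x} + c y$ with the same threshold $k$; since the body of $C'$ forces $y = y' = 1$, the decrease and positivity conditions for $\rho'$ on $C'$ reduce literally to the decrease and positivity conditions for $\rho$ on $C$ under the guard $f(\mathbf{x}) \ge k$. The backward direction takes any eventual linear ranking function $\rho'(\mathbf{x}, y) = \mathbf{a}\mathbf{x} + c y$ for $(C', f')$ with threshold $k$ and sets $\rho(\mathbf{x}) \overset{\mathtt{def}}{=} \mathbf{a}\mathbf{x} + c$; again, instantiating $y = y' = 1$ yields exactly the required affine property for $\rho$ with the same $k$.

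Finally I would invoke Theorem~\ref{algo-is-a-decision-procedure-for-elrf} on $(C', f')$: Algorithm~\ref{algo1} decides in polynomial time whether an eventual linear ranking function exists for $(C', f')$, and by the equivalence above this is exactly the decision we want for $(C, f)$. The construction of $C'$ and $f'$ is a syntactic operation of linear size in the input, so the overall procedure remains polynomial. The only subtle point is ensuring that the affine-to-linear reduction still respects the \emph{eventual} semantics (threshold $k$ and the antecedent $f(\mathbf{x}) \ge k$); this is handled by observing that $f'$ depends only on $\mathbf{x}$, so the guard $f'(\mathbf{x}, y) \ge k$ is interchangeable with $f(\mathbf{x}) \ge k$, and I would flag this as the only step meriting explicit verification rather than a routine check.
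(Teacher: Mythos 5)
Your proposal is correct and follows essentially the same route as the paper: the paper likewise builds the augmented loop $C_a$ with a fresh variable constrained by $y = 1 = y'$, takes $f_a(\mathbf{x},y) = f(\mathbf{x})$, and argues both directions of the correspondence between affine ranking functions for $(C,f)$ and linear ones for $(C_a,f_a)$ before invoking Algorithm~\ref{algo1}. The only difference is presentational: the paper phrases the equivalence as two cases on the algorithm's output (\textbf{true}/\textbf{false}), whereas you state the biconditional up front, which is arguably cleaner.
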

\begin{proof}
From $C$,
$p(\mathbf{x}) \leftarrow c(\mathbf{x},\mathbf{x}'), p(\mathbf{x}')$,
we construct $C_a$,
\(
  p(\mathbf{x},y)
    \leftarrow c(\mathbf{x},\mathbf{x}'), \; y = 1 = y', \; p(\mathbf{x}',y'),
\) where $y$ does not occur in $\mathbf{x}$.
Note that $C_a$ is an SLC loop and that $f_a(\mathbf{x},y) = f(\mathbf{x})$
is an increasing function for $C_a$.
Algorithm~\ref{algo1} applied to $(C_a, f_a)$
gives an answer in polynomial time.

If Algorithm~\ref{algo1} returns \textbf{true} then, by correctness,
there exists a threshold $k$ and an eventual linear function
$\rho_a(\mathbf{x},y) = \mathbf{a} \mathbf{x} + b y$ for $(C_a, f_a)$.
We readily check that $\rho(\mathbf{x}) = \mathbf{a}\mathbf{x} + b$
is an eventual affine ranking function for $(C, f)$ from $k$.

If Algorithm~\ref{algo1} returns \textbf{false} then, by completeness,
there is no eventual linear ranking function for $(C_a, f_a)$.
Assuming there exists an eventual affine ranking function
$\rho(\mathbf{x}) = \mathbf{a} \mathbf{x} + b$ from $k$ for $(C, f)$, then
$\rho_a(\mathbf{x},y) = \mathbf{a} \mathbf{x} + b y$ should be an
eventual linear ranking function from $k$ for $(C_a,f_a)$,
which is a contradiction.
Hence there is no eventual affine ranking function for $(C, f)$.
\end{proof}

\begin{example}
\label{ex-fn-rng-aff-evt-p}
The SLC loop
\[
  p(x, y) \leftarrow x \ge -1, y' \le y - 1, x' \le x + y, p(x', y')
\]
associated to the linear increasing function $f(x,y) = -y$
does not admit an eventual linear ranking function, but does admit
$\rho(x,y)=x+1$ as an eventual affine ranking function from $k=1$.
\end{example}

\subsection{Fully Automated Detection}
\label{sec:elrf-full-detection}

We now consider the problem in its full generality:
given an SLC loop $C$, does there exist an increasing function for $C$
such that $C$ admits an eventual linear ranking function?

Note that the space of increasing functions can be obtained
as a convex set over their coefficients
via the Farkas' Lemma and existentially quantified variables
elimination.\footnote{See also \cite[Section 4.4]{BagnaraMPZ12IC}.}

\begin{definition}
\label{def-inc}
Let
\(
  C = \bigl(
        p(\mathbf{x}) \leftarrow c(\mathbf{x}, \mathbf{x}'), p(\mathbf{x}')
      \bigr)
\)
be an SLC loop.
We denote by $\inc$ the set of vectors $\mathbf{b}$
such that $f(\mathbf{x}) = \mathbf{b}\mathbf{x} = \sum_i b_i x_i$
is increasing for $C$.
\end{definition}

\begin{example}
\label{ex-full-detection}
A linear ranking function does not exist for the SLC loop $C$
\[
  p(x, y) \leftarrow x \ge 0, x' \le x+y, y' \le  -y - 1, p(x', y').
\]
\(
  \inc = \bigl\{\,
                   (b_1,b_2) \in \Qset \times \Qset
                  \bigm|
                    b_1 \le -2, b_1 - 2b_2 = 0
                  \,\bigr\}
\)
induces the space of functions of the form $f(x, y)= b_1 x + b_2 y$,
which are increasing for $C$.
\end{example}

Let us consider the SLC loop of Example~\ref{ex-full-detection}
associated to an increasing function $f(x, y) = b_1 x + b_2 y$
induced by $\inc$.
Defining $\rho(x, y) = a_1 x + a_2 y$ and
considering $b_1$ and $b_2$ as parameters,
$\rho$ is an eventual linear ranking function when
\begin{multline*}
  \exists a_1, a_2, k  \st \forall x, y, x', y' \itc
    \left\{
      \begin{aligned}
         x &\ge 0, & x' \le x + y, \\
         b_1 x + b_2 y &\ge k, & y' \le -y - 1
       \end{aligned}
    \right\} \\
      \implies
        \left\{
          \begin{aligned}
            a_1 x + a_2 y &\ge 1 + a_1 x' + a_2 y', \\
            a_1 x + a_2 y &\ge 0.
          \end{aligned}
        \right.
\end{multline*}
This definition of the problem is denoted $\exists \mathbf{a}, k \st \phi(\mathbf{a}, k)$.
We can apply Farkas' Lemma as follows:
\[
  \begin{array}{lc@{\hspace{1ex}}c@{\;}c@{\hspace{1ex}}c@{\;}c@{\hspace{1ex}}c@{\;}c@{\hspace{1ex}}c@{\;}c@{\hspace{1ex}}c@{\;}c@{\hspace{1ex}}}
    \lambda_1: & 1x & + & 0y & + & 0x' & + & 0y' & + & 0 & \ge & 0 \\
    \lambda_2: & 1x & + & 1y & - & 1x' & + & 0y' & + & 0 & \ge & 0 \\
    \lambda_3: & 0x & - & 1y & + & 0x' & - & 1y' & - & 1 & \ge & 0 \\
    \lambda \, \, \, : & b_1 x & + & b_2 y & + & 0x' & + & 0y' & - & k & \ge & 0 \\
    \implies \\
               & a_1 x & + & a_2y & - & a_1 x' & - & a_2 y' & - & 1 & \ge & 0 \\
               & a_1 x & + & a_2y &   &     &   &     &   &   & \ge & 0.
  \end{array}
\]
Formula $\phi(\mathbf{a}, k)$ is equivalent to the conjunction
of formulas $\dec(\mathbf{a}, k)$, i.e.,
\begin{align*}
  \exists &\lambda_1 \ge 0, \lambda_2 \ge 0, \lambda_3\ge 0, \lambda \ge 0
  \st \\
    &\left\{
      \begin{aligned}
         a_1 &=    \lambda_1 + \lambda_2 + b_1 \lambda
                                         & -a_1 &=   -\lambda_2, \\
         a_2 &=    \lambda_2 - \lambda_3 + b_2\lambda,
                                         & -a_2 &=   -\lambda_3, &
        -1 &\ge -\lambda_3 - k \lambda,
      \end{aligned}
    \right.
\intertext{%
ensuring the decreasing of the ranking function
and $\pos(\mathbf{a},k)$, that is,
}
  \exists &\lambda'_1 \ge 0, \lambda'_2 \ge 0, \lambda'_3\ge 0, \lambda' \ge 0
  \st \\
    &\left\{
      \begin{aligned}
        a_1 &=    \lambda'_1 + \lambda'_2 +b_1 \lambda'
                                         & 0 &=   -\lambda'_2 \\
        a_2 &=    \lambda'_2 - \lambda'_3 + b_2 \lambda',
                                         & 0 &=   -\lambda'_3 &
        0 &\ge -\lambda'_3 - k\lambda',
      \end{aligned}
    \right.
\end{align*}
ensuring the positivity of the ranking function.

Let us focus on $\dec(\mathbf{a}, k)$.
We observe that the products with $\lambda$
lead to a non-linearity that we can circumvent
by noting that, as $\lambda \ge 0$,
either $\lambda = 0$ or $\lambda > 0$.
In the latter case, we introduce a vector $\mathbf{p}=(p_1,p_2)$ of
two new variables where
$p_1 = b_1 \lambda$ and $p_2 = b_2 \lambda$ together
with, as previously, the new variable $P= k \lambda$.
Formula $\exists k \st \dec(\mathbf{a}, k)$ is equivalent to the disjunction
$\dec_1(\mathbf{a}) \lor \exists \lambda, \mathbf{p}. \dec_2(\mathbf{a},\lambda,
\mathbf{p})$ where
in our case, $\dec_1(\mathbf{a})$ is equivalent to
\begin{align*}
  \exists &\lambda_1 \ge 0,  \lambda_2 \ge 0,  \lambda_3 \ge 0
    \st \\
      &\left\{
        \begin{aligned}
          a_1  &=   \lambda_1 + \lambda_2,  & -a_1 &=   -\lambda_2, \\
          a_2  &=    \lambda_2 - \lambda_3, & -a_2 &=   -\lambda_3, &
          -1 &\ge -\lambda_3,
        \end{aligned}
      \right.
\intertext{%
and $\dec_2(\mathbf{a},\lambda, \mathbf{p})$ is equivalent to
}
  \exists &\lambda_1 \ge 0,  \lambda_2 \ge 0,  \lambda_3 \ge 0, P
    \st \\
      &\left\{
        \begin{aligned}
            a_1 &=    \lambda_1+\lambda_2+p_1, & -a_1 &=   -\lambda_2,  &\lambda &> 0,\\
            a_2 &=    \lambda_2-\lambda_3+p_2, & -a_2 &=   -\lambda_3, &
           -1 &\ge -\lambda_3 - P.
        \end{aligned}
      \right.
\end{align*}

For the positivity condition,
formula $\exists k \st \pos(\mathbf{a}, k)$ is equivalent to the disjunction
$\pos_1(\mathbf{a}) \lor \exists \lambda', \mathbf{p'}. \pos_2(\mathbf{a},\mathbf{p'})$
where we introduce a vector $\mathbf{p'}=(p'_1,p'_2)$ of
two new variables where
$p'_1=b_1 \lambda'$, $p'_2=b_2 \lambda'$ together
with, as previously, the new variable $P'= k \lambda'$.
In our case, $\pos_1(\mathbf{a})$ is equivalent to
\begin{align*}
  \exists &\lambda'_1 \ge 0,  \lambda'_2 \ge 0,  \lambda'_3 \ge  0
    \st \\
      &\left\{
        \begin{aligned}
          a_1 &=    \lambda'_1 + \lambda'_2, & 0 &=   -\lambda'_2, \\
          a_2 &=    \lambda'_2 - \lambda'_3, & 0 &=   -\lambda'_3, &
          0 &\ge -\lambda'_3,
        \end{aligned}
      \right.
\intertext{%
and $\pos_2(\mathbf{a},\lambda',\mathbf{p'})$ to
}
  \exists &\lambda'_1 \ge 0,  \lambda'_2 \ge 0,  \lambda'_3 \ge  0,  P'
    \st \\
      &\left\{
        \begin{aligned}
          a_1 &= \lambda'_1 + \lambda'_2 +p'_1,  & 0 &= -\lambda'_2, & \lambda' &> 0, \\
          a_2 &=    \lambda'_2 - \lambda'_3 +p'_2,  & 0 &=   -\lambda'_3, &
          0 &\ge -\lambda'_3  -P'.
        \end{aligned}
      \right.
\end{align*}

Back to our initial problem, the existence of an eventual linear
ranking function is equivalent to the satisfiability of at least one
of the following four systems:
\begin{enumerate}
\item
$\dec_1(\mathbf{a}) \land \pos_1(\mathbf{a})$:
this case means that the increasing function
and $k$ are irrelevant. In other words, for each solution $\mathbf{a}$,
$\rho(\mathbf{x}) = \mathbf{a}\mathbf{x}$ is a standard linear ranking function
and Proposition~\ref{lrf-implies-elrf} applies.

\item
\(
  \dec_1(\mathbf{a})
    \land \pos_2(\mathbf{a}, \lambda', \mathbf{p'})
    \land \mathbf{p'}/\lambda' \in \inc
\):
note that satisfiability of
$\dec_1(\mathbf{a}) \land \pos_2(\mathbf{a}, \lambda', \mathbf{p'})$
is not sufficient, as its solution might lead to the coefficients
$b_1 = p'_1/\lambda'$ and $b_2=p'_2/\lambda'$
($\lambda'$ is strictly positive by definition),
which could correspond to a non-increasing linear function.
The third conjunct, $\mathbf{p'}/\lambda' \in \inc$,
ensures that we stay within the space of increasing functions.

\item
\(
  \dec_2(\mathbf{a}, \lambda, \mathbf{p})
    \land \mathbf{p}/\lambda \in \inc
    \land \pos_1(\mathbf{a})
\):
this case is symmetric to previous one.

\item
\(
  \dec_2(\mathbf{a},\lambda,\mathbf{p})  \land \mathbf{p}/\lambda \in \inc
    \land \pos_2(\mathbf{a},  \lambda',\mathbf{p'})
    \land \mathbf{p'}/\lambda' \in \inc
    \land \mathbf{p} / \lambda = \mathbf{p'} / \lambda'
\):
this case combines the two previous ones.
Note that the condition ensures that we consider the same linear
ranking function and the same increasing function both in $\dec_2$ and
in $\pos_2$.
\end{enumerate}

For our running example, the following SICStus Prolog query
proves that
\(
  \dec_2(\mathbf{a}, \lambda,\mathbf{p})
    \land \mathbf{p}/\lambda \in \inc
    \land \pos_1(\mathbf{a})
\)
is satisfiable
\begin{verbatim}
?- dec2incpos1.
true.
?-
\end{verbatim}
after compilation of the program
\begin{verbatim}
dec2incpos1 :-
   {% DEC2:
    L1 >= 0, L2 >= 0, L3 >= 0,
    A1 = L1 + L2 + P1, A1 = L2, L > 0,
    A2 = L2 - L3 + P2, A2 = L3, -1 >= -L3 - P,
    % INC: B1 =< -2, B1 - 2*B2 = 0
    P1 =< -2*L, P1 - 2*P2 = 0,
    % POS1:
    LP1 >= 0, LP2 >= 0, LP3 >= 0,
    A1 = LP1 + LP2, 0 = LP2,
    A2 = LP2 - LP3, 0 = LP3, 0 >= -LP3}.
\end{verbatim}

The procedure we have informally outlined by means of examples is
actually completely general and is embodied in Algorithm~\ref{algo2}.

\begin{algorithm}
\caption{Existence of an eventual linear ranking function}
\label{algo2}
\begin{algorithmic}[1]
\REQUIRE
$C$, an SLC loop
$p(\mathbf{x}) \leftarrow c(\mathbf{x}, \mathbf{x}'), p(\mathbf{x}')$
\ENSURE
Returns \TRUE\ if and only if there exists an increasing function $f$
for $C$ and
$\rho(\mathbf{x}) = \mathbf{a} \mathbf{x}$
such that $\rho$ is an eventual linear ranking function for $(C, f)$.
\STATE
$\inc \gets \text{the space of increasing functions for $C$}$
\STATE
$\dec(\mathbf{a}, k) \gets \text{Farkas for the decreasing of $\rho$}$
\STATE
$\dec_1(\mathbf{a}), \dec_2(\mathbf{a}, \lambda, \mathbf{p}) \gets \text{linearization of $\dec(\mathbf{a}, k)$}$
\STATE
$\pos(\mathbf{a}, k) \gets \text{Farkas for the positivity of $\rho$}$
\STATE
$\pos_1(\mathbf{a}), \pos_2(\mathbf{a},\lambda', \mathbf{p'}) \gets \text{linearization of $\pos(\mathbf{a}, k)$}$
\STATE
$\phi_{1,1} \gets \dec_1(\mathbf{a}) \land \pos_1(\mathbf{a})$
\STATE
 $\phi_{1,2} \gets \dec_1(\mathbf{a}) \land \pos_2(\mathbf{a},\lambda',\mathbf{p'})
 					\land \mathbf{p'} / \lambda' \in \inc$
 \STATE
 $\phi_{2,1} \gets \dec_2(\mathbf{a},\lambda,\mathbf{p}) \land \mathbf{p} / \lambda \in \inc
 					\land \pos_1(\mathbf{a})$
 \STATE
$\phi_{2,2} \gets \dec_2(\mathbf{a},\lambda,\mathbf{p}) \land \mathbf{p} / \lambda \in \inc
					\land \pos_2(\mathbf{a},\lambda',\mathbf{p'}) \land \mathbf{p'} / \lambda' \in \inc
					\land \mathbf{p} / \lambda = \mathbf{p'} / \lambda'$
\IF {$\bigvee_{1 \le i,j \le 2} \phi_{i,j}$ is satisfiable}
\RETURN \TRUE
\ELSE  \RETURN \FALSE
\ENDIF
\end{algorithmic}
\end{algorithm}

\begin{theorem}
\label{algo2-is-a-decision-procedure-for-elrf}
Let $C$ be an SLC loop.
\textup{Algorithm~\ref{algo2}} decides the existence of an increasing
function $f$ and a linear function $\rho$ such that $\rho$ is an
eventual linear ranking function for $(C, f)$.
\end{theorem}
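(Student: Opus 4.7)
The plan is to show that the disjunction $\bigvee_{1 \le i,j \le 2} \phi_{i,j}$ tested by Algorithm~\ref{algo2} is satisfiable if and only if there exist $\mathbf{a}, \mathbf{b}, k$ with $\mathbf{b} \in \inc$ and $\rho(\mathbf{x}) = \mathbf{a}\mathbf{x}$ an eventual linear ranking function for $(C, f)$, where $f(\mathbf{x}) = \mathbf{b}\mathbf{x}$, from threshold $k$. By Definitions~\ref{def-fn-rng-lin-evt} and~\ref{def-inc}, this is $\exists \mathbf{a}, \mathbf{b}, k \st \mathbf{b} \in \inc \land \phi(\mathbf{a}, \mathbf{b}, k)$, where $\phi$ is the conjunction of the two universal implications for decrease and positivity of $\rho$. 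I would first apply \emph{Farkas' Lemma} to each implication, treating $\mathbf{a}, \mathbf{b}, k$ as parameters, to obtain $\dec(\mathbf{a}, \mathbf{b}, k) \land \pos(\mathbf{a}, \mathbf{b}, k)$ exactly as in the derivation preceding the algorithm; these systems are linear except for the bilinear terms $\lambda b_i$ and $\lambda k$ (and their primed counterparts) contributed by the Farkas multiplier attached to the row $\mathbf{b}\mathbf{x} \ge k$.

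Next, I would linearize along the lines of Lemmata~\ref{lem1-fn-rng-lin-evt} and~\ref{lem2-fn-rng-lin-evt}, but now with $\mathbf{b}$ a variable rather than a given parameter. Case analysis on the sign of $\lambda$ gives two subcases: if $\lambda = 0$ the row $\mathbf{b}\mathbf{x} \ge k$ is inactive, yielding $\dec_1(\mathbf{a})$, which constrains neither $\mathbf{b}$ nor $k$; if $\lambda > 0$, I introduce $\mathbf{p} := \lambda\mathbf{b}$ and $P := \lambda k$, yielding the linear $\dec_2(\mathbf{a}, \lambda, \mathbf{p})$, from which one recovers $\mathbf{b} = \mathbf{p}/\lambda$ and $k = P/\lambda$. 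The positivity side is handled identically. The remaining side condition $\mathbf{b} \in \inc$ then becomes $\mathbf{p}/\lambda \in \inc$: since $\inc$ is a polyhedron defined by linear inequalities in $\mathbf{b}$, and $\lambda > 0$, multiplying these inequalities through by $\lambda$ preserves them and produces a linear system in $(\mathbf{p}, \lambda)$.

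Combining the binary choices on each side gives the four cases $\phi_{i,j}$. When exactly one side is in case~2, the coefficients $\mathbf{b}$ and the threshold are witnessed by that side alone (the other side imposes only a standard Farkas constraint on $\mathbf{a}$ independent of $\mathbf{b}, k$); when both sides are in case~2, the extra equation $\mathbf{p}/\lambda = \mathbf{p}'/\lambda'$ enforces that the two recovered $\mathbf{b}$'s coincide, and a common threshold is obtained as $\max(P/\lambda, P'/\lambda')$, by the same monotonicity argument used in the Proposition preceding Algorithm~\ref{algo1}. This establishes both directions, hence correctness and completeness of Algorithm~\ref{algo2}. The main obstacle — and the reason only decidability rather than polynomial-time complexity is asserted, unlike Theorem~\ref{algo-is-a-decision-procedure-for-elrf} — is that $\phi_{2,2}$ contains the bilinear equation $\lambda'\mathbf{p} = \lambda\mathbf{p}'$; the remaining constraints are linear, so satisfiability of $\phi_{2,2}$ reduces to deciding a conjunction of linear inequalities together with a single quadratic equation over $\Qset$, which is decidable by quantifier elimination over real-closed fields (or, equivalently, by reparameterizing with $\mathbf{b} := \mathbf{p}/\lambda$ as the primary unknown and eliminating $\lambda, \lambda'$). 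This suffices for the decidability claim stated in the theorem.
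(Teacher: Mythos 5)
Your proposal is correct and takes essentially the same route as the paper, which supports this theorem only by the informal derivation preceding Algorithm~\ref{algo2}: Farkas' Lemma with $\mathbf{a}$, $\mathbf{b}$, $k$ treated as parameters, linearization via the case split $\lambda = 0$ versus $\lambda > 0$ with $\mathbf{p} = \lambda\mathbf{b}$ and $P = \lambda k$, the side conditions $\mathbf{p}/\lambda \in \inc$, and the consistency equation $\mathbf{p}/\lambda = \mathbf{p}'/\lambda'$ in $\phi_{2,2}$. If anything, you are more explicit than the paper on the max-of-thresholds step and on why the bilinear system $\phi_{2,2}$ remains decidable.
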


Exactly as in the previous section, if Algorithm~\ref{algo2}
returns \textbf{true} then we can extract an increasing function $f$,
a threshold $k$, and a linear function $\rho$.
We can also generalize the approach to the fully automated
detection of eventual affine ranking functions.

With respect to complexity,
Algorithm~\ref{algo2}  is not polynomial for two reasons.
In step 1, computing the set $\inc$ of linear increasing functions for $C$
requires elimination of existentially quantified variables.
In step 2, formula $\phi_{2,2}$  leads to a non-linear system
and we may have to check its satisfiability in step 10.
Although decidable, we are not aware of the existence of
polynomial algorithms for these problems.

\subsection{Verification}
\label{sec:verification}

Given $C$ an SLC loop, an associated increasing
function $f$, and a linear function $\rho$, we want to know whether
$\rho$ is a ranking function.
We can run Algorithm~\ref{algo1}, with the coefficients $\mathbf{a}$
fully instantiated. If needed, we can compute the threshold $k$
as explained in Section~\ref{sec:elrf-semi-detection}.
It follows that the verification problem is polynomial.

\subsection{Implementation}
\label{sec:implementation}

We  have implemented both algorithms in SICStus Prolog.
However, as $\phi_{2,2}$ of Algorithm~\ref{algo2}
leads to a non-linear system, we relaxed this formula to
\[
  \dec_2(\mathbf{a}, \lambda, \mathbf{p})
    \land \mathbf{p} / \lambda \in \inc
    \land \pos_2(\mathbf{a},\lambda',\mathbf{p'})
    \land \mathbf{p'} / \lambda' \in \inc,
\]
which is now linear. As shown in the following proposition, the existence
of an eventual linear ranking function (hence termination) is preserved,
but the associated increasing function is not linear.

\begin{proposition}
Let $C$ be an SLC loop and assume that
\(
  \dec_2(\mathbf{a},\lambda,\mathbf{p})
    \land \mathbf{p} / \lambda \in \inc
    \land \pos_2(\mathbf{a},\lambda',\mathbf{p'})
    \land \mathbf{p'} / \lambda' \in \inc
\)
is true.
Then there exists a non-linear increasing function $f$ such that
$\rho(\mathbf{x})= \mathbf{a} \mathbf{x}$
is an eventual linear ranking function for ($C$,$f$).
\end{proposition}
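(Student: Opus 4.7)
My plan is to extract two (possibly distinct) linear increasing functions from the two halves of the hypothesis and then combine them into a single piecewise linear function by taking their pointwise minimum. Concretely, from $\dec_2(\mathbf{a},\lambda,\mathbf{p}) \land \mathbf{p}/\lambda \in \inc$ I would set $f_1(\mathbf{x}) = (\mathbf{p}/\lambda)\cdot\mathbf{x}$ and $k_d = P/\lambda$, and from $\pos_2(\mathbf{a},\lambda',\mathbf{p'}) \land \mathbf{p'}/\lambda' \in \inc$ I would set $f_2(\mathbf{x}) = (\mathbf{p'}/\lambda')\cdot\mathbf{x}$ and $k_p = P'/\lambda'$. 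The $\inc$ constraints guarantee that $f_1$ and $f_2$ are (linear) increasing functions for $C$. Reading the Farkas multipliers backwards, $\dec_2$ with $k_d$ substituted in is exactly the certificate that $c(\mathbf{x},\mathbf{x'}) \land f_1(\mathbf{x}) \ge k_d \implies \rho(\mathbf{x}) \ge 1 + \rho(\mathbf{x'})$, and similarly $\pos_2$ with $k_p$ certifies $c(\mathbf{x},\mathbf{x'}) \land f_2(\mathbf{x}) \ge k_p \implies \rho(\mathbf{x}) \ge 0$.

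Then I would define the candidate witness $f(\mathbf{x}) \overset{\mathtt{def}}{=} \min\bigl(f_1(\mathbf{x}), f_2(\mathbf{x})\bigr)$ and the threshold $k \overset{\mathtt{def}}{=} \max(k_d, k_p)$. The two properties to check are:
\begin{itemize}
\item \emph{$f$ is increasing for $C$}: given $c(\mathbf{x},\mathbf{x'})$, both $f_i(\mathbf{x'}) \ge 1 + f_i(\mathbf{x})$ hold, and an elementary case split on which $f_i$ achieves the minimum at $\mathbf{x}$ shows $f(\mathbf{x'}) \ge f_i(\mathbf{x'}) \ge 1 + f_i(\mathbf{x}) = 1 + f(\mathbf{x})$.
\item \emph{$\rho$ is an eventual ranking function for $(C,f)$ with threshold $k$}: if $f(\mathbf{x}) \ge k$, then in particular $f_1(\mathbf{x}) \ge k \ge k_d$ and $f_2(\mathbf{x}) \ge k \ge k_p$, so decrease of $\rho$ follows from the $\dec_2$ certificate and positivity from the $\pos_2$ certificate.
\end{itemize}

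Finally, I would observe that since $\mathbf{p}/\lambda = \mathbf{p'}/\lambda'$ is precisely the extra conjunct that has been dropped from $\phi_{2,2}$, in general $f_1 \ne f_2$ and $f = \min(f_1,f_2)$ is genuinely piecewise linear rather than linear, justifying the ``non-linear'' claim in the statement.

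I do not expect real obstacles: the only substantive step is noticing that one must use $\min$ rather than $\max$ (so that a single threshold $k$ forces each $f_i$ above its own $k_i$), and that $\min$ nevertheless preserves the $+1$ strict-increase property. The rest is bookkeeping: reading off the Farkas multipliers from $\dec_2$ and $\pos_2$ after dividing through by $\lambda$ and $\lambda'$ respectively, which is valid because both multipliers are strictly positive by definition of the formulas.
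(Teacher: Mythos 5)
Your proof is essentially the paper's own argument: the same witnesses $f=\min(f_1,f_2)$ and $k=\max(k_d,k_p)$, with the same reading of the Farkas certificates. One small repair in the only step you spell out: the inequality $f(\mathbf{x}')\ge f_i(\mathbf{x}')$ is backwards (a minimum is $\le$ each of its arguments), so the case split should be on which $f_j$ attains the minimum at $\mathbf{x}'$, giving $f(\mathbf{x}')=f_j(\mathbf{x}')\ge 1+f_j(\mathbf{x})\ge 1+\min\bigl(f_1(\mathbf{x}),f_2(\mathbf{x})\bigr)=1+f(\mathbf{x})$.
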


\begin{proof}
As $\dec_2(\mathbf{a},\lambda,\mathbf{p})  \land \mathbf{p}/\lambda \in \inc$
is true, there exists an increasing function $f_d$ and a rational
$k_d$ such that when the value of $f_d$ is beyond $k_d$, $\rho$
decreases.
Similarly, as
$\pos_2(\mathbf{a}, \lambda',\mathbf{p'}) \land \mathbf{p'}/\lambda' \in \inc$
is true, there exists an increasing function $f_p$ and a rational
$k_p$ such that when the value of $f_p$ is beyond $k_p$, $\rho$ is
non-negative.
Let $k = \max(k_p, k_d)$ and
$f(\mathbf{x}) = \min\bigl(f_p(\mathbf{x}), f_d(\mathbf{x})\bigr)$.
One readily checks that $f$ is a non-linear increasing function for $C$
and $\rho$ is an eventual linear ranking function for $(C, f)$.
\end{proof}

\section{Related Work and Experiments }
\label{sec:related-work-and-experiments}

As eventual linear ranking functions generalize linear ranking
functions, we focus on related work that goes beyond linear ranking
functions for SLC loops.
In order to appreciate the relative power of the different methods,
we report on the results obtained with  our algorithms on the loops
discussed in the papers where the other approaches were introduced.

The method proposed in \cite{ChenFM12} repeatedly divides the state
space to find a linear ranking function on each subspace, and then
checks that the transitive closure of the transition relation is
included in the union of the ranking relations.
As the process may not terminate, one needs to bound the search.
\cite{ChenFM12} also proposes a test suite, upon which we tested
our approach.
As expected, every loop \cite[Table~1]{ChenFM12} which terminates with
a linear ranking also has an eventual linear ranking.
Moreover, loops 6, 12, 13, 18, 21, 23, 24, 26, 27, 28, 31, 32, 35, and~36
admit an eventual linear ranking function (which is discovered without
using neither $\phi_{2,2}$ nor its relaxation).
These are all shown terminating with the tool of \cite{ChenFM12}.
On the other hand, loops 14, 34, and 38 do have a
\emph{disjunctive ranking function}
(following the terminology of \cite{ChenFM12}),
but do not admit an eventual linear ranking function.

\cite{GantyG13} shows how to partition the loop relation into
behaviors that terminate and behaviors to be analyzed in a subsequent
termination proof after refinement.
This work addresses both termination  and  conditional termination problems
in the same framework.
Concerning the benchmarks proposed in \cite[Table~1]{GantyG13},
loops 6--41 all have an eventually linear ranking function except for
loops 11, 14, 30, 34, and~38.

A method based on abstract interpretation for synthesizing
ranking functions is described in \cite{Urban13}.
Although the work contains no completeness result, the approach
is able to discover piecewise-defined ranking functions.

Finally, let us point out that the concept of \emph{eventual termination}
appeared first in~\cite{BradleyMS05ICALP,BradleyMS05VMCAI}.
The class loops studied in these works is wider but,
as the technique of~\cite{BradleyMS05VMCAI} relies on finite differences,
this approach is incomplete.
On the other hand, while \cite{BradleyMS05ICALP} is also based
on Farkas' Lemma, it seems
[A.~R.~Bradley, Personal communication, May 2013]
that the \emph{polyranking} approach cannot
prove, e.g., termination of the SLC loop
$p(x, y) \leftarrow x \ge 1, x' = y, y' = y-1, p(x', y')$, which admits
an eventual linear ranking function.

\section{Conclusion and Future Work}
\label{sec:conclusion-and-future-work}

We have proposed a definition of eventual linear ranking
function for SLC loops that strictly generalizes the concept
of linear ranking function.
We also defined two correct and complete algorithms for detecting such
ranking functions under different hypotheses.
The first algorithm shows that the mere knowledge
of the right increasing function allows checking the existence
or even synthesizing an eventual linear ranking function in polynomial time.
The second algorithm decides the existence of an
eventual linear ranking function in its full generality but
is not polynomial. We have also explained how to extend the algorithms
for deciding eventual affine ranking functions.
The algorithms admit a simple formulation as a constraint logic
program and have been fully implemented in SICStus Prolog inside the
BinTerm termination prover~\cite{SpotoMP10}.

It has to be noted that a nice property of the notion of eventual
(not necessarily linear) ranking function is its simplicity.
This is important when functions that witness termination have to be
provided (and/or understood) by humans.  This is the case when
annotating a C/ACSL program with loop variants \cite{BaudinCFMM+13}:
for the cases when a ranking function to be specified in a
\verb+loop variant+ clause is not obvious, one could extend ACSL with a
\verb+loop prevariant+ clause that allows the annotator
to indicate a candidate increasing function.
In the linear case, our first algorithm can efficiently decide
whether the two clauses constitute a termination witness.

On the other hand, there obviously are, as indicated
in Section~\ref{sec:related-work-and-experiments},
more complex classes of ranking functions and algorithms that allow
to establish the termination of SLC loops that do not admit an
eventual linear ranking functions.
A proper assessment of the relative merits of these approaches,
all extremely recent, requires an extensive experimental evaluation
that is one of our objectives for future work.

The verification of linear ranking functions for integer SLC loops, i.e.,
checking the satisfiability of
$c(\mathbf{x}, \mathbf{x}') \land \rho(\mathbf{x})  <  1 +  \rho(\mathbf{x}')$
and $c(\mathbf{x}, \mathbf{x}') \land \rho(\mathbf{x}) < 0$,
is an $\NP$-complete problem.
Concerning the existence of linear ranking functions,
as the Farkas' Lemma is not true for the integers,
the method presented in Section~\ref{sec:linear-ranking-functions} is not
valid.
The problem, which has been solved very recently in \cite{Ben-AmramG13},
is $\coNP$-complete, and the paper proposes an exponential-time algorithm.
Extending the present approach to integer SLC loops is another
interesting idea to consider for future work.

\paragraph{Acknowledgments}
We are grateful to Anthony Alezan, Aaron R. Bradley, \'Etienne Payet,
and some anonymous referees for their helpful comments.


\end{document}